 \newtheorem{thm}{Theorem}[section]
 \newtheorem{prop}[thm]{Proposition}
 \theoremstyle{definition}
 \theoremstyle{remark}
 \newtheorem{Remark}[thm]{Remark}
 \newcommand{\PP}{\mathbb{P}}
\newcommand{\bm}{\bibitem}
\newcommand{\no}{\noindent}
\newcommand{\be}{\begin{equation}}
\newcommand{\ee}{\end{equation}}
\newcommand{\bea}{\begin{eqnarray}}
\newcommand{\bes}{\begin{subequations}}
\newcommand{\ees}{\end{subequations}}
\newcommand{\bgt}{\begin{gather}}
\newcommand{\egt}{\begin{gather}}
\newcommand{\eea}{\end{eqnarray}}
\newcommand{\beaa}{\begin{eqnarray*}}
\newcommand{\eeaa}{\end{eqnarray*}}
\newcommand{\NN}{{\mathbb N}}
\newcommand{\EE}{{\mathbb E}}
\newcommand{\RR}{{\mathbb R}}
\newcommand{\cal}{\mathcal}
\begin{document}

\title{Counterparty Risk Valuation: \\ A Marked Branching Diffusion Approach}
\author{Pierre Henry-Labord\`ere}
\address{Soci\'et\'e G\'en\'erale, Global Market Quantitative
Research} \email{pierre.henry-labordere@sgcib.com}
\date{}
\keywords{Counterparty risk valuation, BSDE, branching diffusions,
super-diffusions, semi-linear PDE, Galton-Watson tree}

\maketitle

\begin{abstract}
\no The purpose of this paper is to design an algorithm for the
computation of the counterparty risk which is competitive in regards
of a brute force ``Monte-Carlo of Monte-Carlo"  method (with nested
simulations). This is achieved using marked branching diffusions
describing a Galton-Watson random tree. Such an algorithm leads at
the same time to a computation of the (bilateral) counterparty risk
when we use the default-risky or counterparty-riskless option values
as mark-to-market. Our method is illustrated by various numerical
examples.
\end{abstract}

\section{Introduction}

\no The recent financial crisis has highlighted the importance of
credit valuation adjustment when pricing derivative contracts.
Bilateral counterparty risk is the risk that the issuer of a
derivative contract or the counterparty may default prior to the
expiry and fail to make future payments. This market imperfection
leads naturally for Markovian models to non-linear second-order
parabolic partial differential equations (PDEs). More precisely, the
non-linearity in the pricing equation affects none of the
differential terms and depends on the positive part of the
mark-to-market value of the derivative upon default. We have a
so-called semi-linear PDE. The numerical solution of this equation
is a formidable task that has attracted little attention from
practitioners. For multi-asset portfolios, these PDEs which suffer
from the curse of dimensionality cannot be solved with
finite-difference schemes. We must rely on probabilistic methods. Up
to now, it seems that a brute force intensive ``Monte-Carlo of
Monte-Carlo"  method (with nested simulations) is the only tool
available for this task.

\no In this paper, we rely on new advanced non-linear Monte-Carlo
methods for solving these  semi-linear PDEs. A first approach is to
use the so-called first-order backward stochastic differential
equations. Unfortunately, in practise this method requires the
computation of conditional expectations using regressions. Finding
good quality regressors is notably difficult, especially for
multi-asset portfolios. This leads us to introduce a new method
based on branching diffusions describing a marked Galton-Watson
random tree. A similar algorithm can also be applied to obtain
stochastic representations for solutions of a large class of
semi-linear parabolic PDEs in which the non-linearity  can be
approximated by a polynomial function.

\section{Credit Valuation Adjustment}

\subsection{Semi-linear PDEs}

\no For completeness, we derive the PDE arising in counterparty risk
valuation of a European derivative with a payoff $\psi$ at maturity
$T$. In short, depending on the (modeling) choice of the
mark-to-market value of the derivative upon default, we will get two
types of semi-linear PDEs that can be schematically written as \bea
\partial_t u + {\cal L} u +r_0 u +r_1 u^+ &=& 0,\quad u(T,x)=\psi(x)
\label{noMkt}\eea \no and \bea
\partial_t u + {\cal L} u +r_0 u + r_1 M +r_2 M^+  &=& 0,\quad \label{Mkt}
u(T,x)=\psi(x)  \\
\partial_t M + {\cal L} M +r_4 M  &=& 0,\quad M(T,x)=\psi(x) \nonumber
\eea ${\cal L}$ is the It\^o generator of a multi-dimensional
diffusion process and $r_i$ are arbitrary functions of $t$ and $x$.

\subsection{PDE derivation}

\no We assume the issuer is allowed to dynamically trade $d$
underlying assets $X_\cdot \in \RR^d_+$. Additionally, in order to
hedge his credit risk on the counterparty name, he can trade a
default risky bond, denoted $P_t^{2}$. Furthermore, the values of
the underlyings are  not altered by the counterparty default which
is modeled by a Poisson jump process. For the sake of simplicity, we
consider a constant intensity. This assumption can be easily
relaxed, in particular the intensity can follow an It\^o diffusion.
For use below, expressions with a subscript $2$  denote counterparty
quantities. We consider the case of a long position in a single
derivative whose value we denote $u$. In practice netting agreements
apply to the global mark-to-market value of a pool of derivative
positions - $u$ would then denote the aggregate value of these
derivatives. The processes $X_t, P_t^{2}$ satisfy under the
risk-neutral
measure $\PP$ (we assume the market model is complete) \beaa {dX_t \over X_t}&=&r dt+\sigma(t,X_t).dW_t \\
{dP_t^{2} \over P_t^{2}}&=&(r+\lambda_2)dt-dJ_t^2 \eeaa \no with
$W_t$ a $d$-dimensional Brownian motion, $J_t^2$ a jump Poisson
process with intensity $\lambda_2$ and $r$ the interest rate. The
no-arbitrage condition and the completeness of the market give that
$e^{-rt}u(t,X_t)$ is a $\PP$-martingale, characterized by
 \beaa &&\partial_t u +  {\cal L} u  + \lambda_2 \left(
\tilde{u}- u\right)
 -r u    =0 \eeaa where ${\cal L}$ denotes the It\^o
generator of $X$ and $\tilde{u}$
 the derivative value after the counterparty
has defaulted. At the default event, $\tilde{u}$ is given
by\footnote{$X \equiv X^+- X^-$.} \beaa \tilde{u}&=&R M^+-M^- \eeaa
with $M$ the mark-to-market value of the derivative  to be used in
the unwinding of the position upon default and $R$ the recovery
rate. There is an ambiguity in the market about the convention for
the mark-to-market value to be settled at default. There are two
natural conventions (see \cite{bri} for discussions about the
relevance of these conventions): The mark-to-market of the
derivative is evaluated at the time of default with provision for
counterparty risk or without.

\no $1$. Provision for counterparty risk, $M=u$: \bea &&\partial_t u
+ {\cal L} u - (1-R) \lambda_2 u^+  -ru=0,\quad u(T,x)=\psi(x) \eea
\no In the particular case when the payoff $\psi(x)$ is negative,
the solution is given by $e^{-r(T-t)}\EE_{t,x}[\psi(X_T)]$.

\no $2$. No provision for counterparty risk: \bea &&\partial_t u +
{\cal L} u
 +\lambda_2 \left( R M^+-M^- - u\right)
 -r u    =0,\quad u(T,x)=\psi(x)  \label{PDEsystem} \\
&&\partial_t M +  {\cal L} M  -r M   =0,\quad M(T,x)=\psi(x)
\nonumber \eea \no In the case of collateralized positions,
counterparty risk applies to the variation of the mark-to-market
value of the corresponding positions experienced over the time it
takes to qualify a failure to pay margin as a default event -
typically a few days. In the latter case, the non-linearity $u_t^+$
should be substituted with $(u_t-u_{t+\Delta})^+$ where $\Delta$ is
this delay. We will come back to this situation in the last section
(see remark \ref{colla}).

\no  By proper discounting and replacing $u$ by $-u$ for the sake of
the presentation, these two PDEs can be cast into normal forms \bea
&&
\partial_t u + {\cal L} u +\beta \left( u^+ -u\right)=0,\quad
u(T,x)=\psi(x)
\label{toy}:{\bf \mathrm{PDE2}}  \\
&&\partial_t u + {\cal L} u +{\beta \over 1-R} \left( (1-R)
\EE_{t,x}[\psi]^++R \EE_{t,x}[\psi] - u\right)=0,\quad
u(T,x)=\psi(x) \label{toy2}:{\bf  \mathrm{PDE1}}  \eea with $\beta
\equiv \lambda_2(1-R) \in \RR^+$. \no It is interesting to note that
a similar semi-linear PDE type (\ref{toy}) appears also in the
pricing of American options.

\subsection{American options}

\no The replication price of an American option with exercise payoff
$\psi(x)$ satisfies a variational PDE: \beaa \max\left( \partial_t u
+{\cal L} u , \psi(x)-u \right)=0,\quad u(T,x)=\psi(x) \eeaa \no
This PDE can be converted into a semi-linear PDE (see \cite{ben} for
details): \beaa
\partial_t u +{\cal L} u =1_{\psi(x) \geq u}{\cal L} \psi(x),\quad
u(T,x)=\psi(x) \eeaa Stochastic representations of this equation
lead to well-known early exercise premium formulas of American
options. Our algorithm can also be applied to this non-linear PDE.
It does not require regressions as in the well-known
Longstaff-Schwartz method \cite{lon} or a ``Monte-Carlo of
Monte-Carlo method" as in Rogers's dual algorithm \cite{bro, rog}.
\vskip 2truemm \no In the next section, we briefly list (non-linear)
Monte-Carlo algorithms which can be used to solve PDEs
(\ref{toy})-(\ref{toy2}) and highlight their weaknesses in the
context of credit valuation adjustment.

\section{Non-linear Monte-Carlo algorithms}

\subsection{A brute force algorithm}

\no Using Feynman-Kac's formula, the solution of PDE (\ref{toy}) can
be represented stochastically as \bea
 u(t,x)= e^{-\beta(T-t)} \EE_{t,x}[\psi(X_T)]  +\int_t^T \beta e^{-\beta(s-t)}\EE_{t,x}[u^+(s,X_s)]ds
 \eea with $X$ an It\^o diffusion with generator ${\cal L}$ and $\EE_{t,x}[\cdot]=
 \EE[\cdot|X_t=x]$. By
 assuming that the intensity $\beta$ is small, we get the
 approximation (this is exact for PDE (\ref{toy2})\footnote{Precisely, we get $
e^{-{\lambda_2}(T-t)} \EE_{t,x}[\psi(X_T)]  +\lambda_2
 \int_t^T e^{-\lambda_2(s-t)}
 \EE_{t,x}[(1-R)\left(\EE_{s,X_s}[\psi(X_T)]\right)^++R\EE_{s,X_s}[\psi(X_T)]]ds$.} )
\bea
 u(t,x) = e^{-\beta(T-t)} \EE_{t,x}[\psi(X_T)]  +\beta e^{-\beta(T-t)} \int_t^T
 \EE_{t,x}[\left(\EE_{s,X_s}[\psi(X_T)]\right)^+]ds +O(\beta^2)
 \label{approxbeta} \eea Then, at a next step, we discretise the Riemann integral
\beaa
 u(t,x) \simeq e^{-\beta(T-t)}  \EE_{t,x}[\psi(X_T)]  + \beta e^{-\beta(T-t)}  \sum_{i=1}^n
 \EE_{t,x}[\left(\EE_{t_i,X_{t_i}}[\psi(X_T)]\right)^+] \Delta t_i
 \eeaa \no This last expression can be numerically tackled by using
 a brute force ``Monte-Carlo of Monte-Carlo" method. The second MC is used
 to compute $\EE_{t_i,X_{t_i}}\psi(X_T)]$ on each path generated by the
 first MC algorithm. Although straightforward, this method
 suffers from the curse of dimensionality and requires generating
 $O(N_1 \times N_2)$ paths. Due to this complexity, the literature focuses on
 exposition of linear portfolios for which the second MC can be
 skipped by using  closed-form formulas or low-dimensional parametric
 regressions (see for example \cite{bri0} in which the authors consider the pricing of CMS spread option and CCDSs).

 \no Could we design a simple (non-linear) Monte-Carlo algorithm
which solves our PDEs (\ref{toy})-(\ref{toy2}), without relying on
an approximation such as (\ref{approxbeta})? This is the purpose of
this paper.

\subsection{Backward stochastic differential equations}

\no A first approach is to simulate a backward stochastic
differential equation (in short BSDE): \bea
dX_t&=&\mu(t,X_t)dt+\sigma(t,X_t).dW_t,\quad X_0=x \\
dY_t&=&-\beta Y_t^+dt+Z_t \sigma(t,X_t).dW_t \\
Y_T&=&\psi(X_T) \label{terminal} \eea \no where $(Y,Z)$ are required
to be adapted processes and ${\cal L}=\sum_i \mu_i\partial_{x^i}
 + {1 \over 2} \sum_{i,j} (\sigma \sigma^*)_{ij}\partial^2_{x^i x^j}$. BSDEs differ from
(forward) SDEs in that we impose the terminal value (see Equation
(\ref{terminal})). Under the condition $\psi \in
\mathrm{L}^2(\Omega)$, this BSDE admits a unique solution
\cite{par}. A straightforward application of It\^o's lemma gives
that the solution of this BSDE is $\left(Y_t=e^{\beta(T-t)}u(t,X_t),
Z_t=e^{\beta(T-t)} \nabla_x u(t,X_t)\right)$ with $u$ the solution
of PDE (\ref{toy}). This leads to a Monte-Carlo like numerical
solution of (\ref{toy}) via an efficient discretization scheme for
the above BSDE.

\no This BSDE can be discretized by an Euler-like scheme
($Y_{t_{i-1}}$ is forced to be ${\cal F}_{t_{i-1}}$-adapted, $({\cal
F}_t)_{t \geq 0}$ being the natural filtration generated by the
Brownian motions): \beaa \EE_{t_{i-1}}[Y_{t_{i}}]-Y_{t_{i-1}}=-\beta
\Delta t_i \left( \theta Y_{t_{i-1}}^+ +(1-\theta)
\EE_{t_{i-1}}[Y_{t_{i}}]^+ \right)\eeaa \no with $\theta \in [0,1]$.
This is equivalent to (we take $\theta \beta \Delta t_i<1$) \beaa
Y_{t_{i-1}}&=&\EE_{t_{i-1}}[Y_{t_{i}}] \left(
1_{\EE_{t_{i-1}}^\PP[Y_{t_{i}}]>0}{1+(1-\theta)\beta \Delta t_i
\over 1- \theta \beta \Delta t_i} +1_{\EE_{t_{i-1}}[Y_{t_{i}}]<0}
\right) \eeaa \no This requires the computation of the conditional
expectation $\EE_{t_{i-1}}^\PP[Y_{t_{i}}]$ (in practise by
regression methods) which could be quite difficult and
time-consuming, especially for multi-asset portfolios.

\subsection{Gradient representation}

\no A more powerful approach is synthesized by the following
proposition which relies on Kunita's stochastic flows of
diffeomorphisms (see \cite{tal}). Let $u$ be the solution of the
\emph{one-dimensional} semi-linear PDE
 \bea \partial_t u+{1 \over 2}\sigma^2(t,x) \partial_x^2 u+
 f(u)=0 \label{1dnpde} \eea with the terminal condition $u(T,x)=\psi(x)$. By differentiating equation (\ref{1dnpde}) with respect to $x$
(assuming smoothness of the coefficients) we get \bea &&\partial_t
\Delta+\left( \left(\sigma \partial_x \sigma \right)\partial_x+ {1
\over 2}\sigma^2(t,x)
\partial_x^2\right)\Delta  +  f'\left(u\right)\Delta=0 \eea with the terminal condition
$\Delta(T,x)=\psi'(x)$. The equation satisfied by the gradient
$\Delta$ is then interpreted as a (linear) Fokker-Planck PDE. We
have the following representation \cite{tal}
  \beaa u(t,x)=-\int_{\RR^+} \psi'(a) da \; \EE_{t}[ 1(X_{T}^{a}-x) e^{\int_t^T f'\left(u(T+t-s,X_s^a)\right) ds}] \eeaa where
the It\^o process $X_s^a$ is the solution to \beaa
dX_s^a=\sigma(T+t-s,X_s^a)dB_s+\left(\sigma\partial_x
\sigma\right)(T+t-s,X_s^a)ds \;,\; s \in [t,T] \;,\; X_t^a=a \eeaa
\no $B_s$ is a standard Brownian. This representation leads to a
particle algorithm \cite{tal}. Although appealing, this (forward)
approach is only applicable in the one-dimension setup for which we
can use a PDE solver. Can we design a similar forward algorithm
applicable in higher dimensions? This leads us to
 branching diffusions.

\subsection{Branching diffusions: an introduction}
\label{branchingsection} \no Branching diffusions have been first
introduced by McKean \cite{MCKean} to give a probabilistic
representation of the Kolmogorov-Petrovskii-Piskunov PDE and more
generally of semi-linear PDEs of the type \bea &&\partial_t u +
{\cal L} u +\beta(t)\left(\sum_{k=0}^\infty
p_k u^k -u\right)=0\quad\mathrm{in}\quad \RR_+ \times \RR^d   \label{KPP} \\
&&u(T,x)=\psi(x)\quad\mathrm{in}\quad \RR^d  \nonumber \eea with
$\beta(\cdot) \in \RR^+$. Here the non-linearity is a power series
in $u$ where the coefficients satisfy the restrictive condition:
\bea f(u)\equiv \sum_{k=0}^\infty p_k u^k,\quad \sum_{k=0}^\infty
p_k=1\,\quad 0 \leq p_k \leq 1 \label{Poly} \eea \no The
probabilistic interpretation of such an equation goes as follows:
\no Let a single particle start at the origin, perform an It\^o
diffusion  on $\RR^d$ with generator ${\cal L}$, after a mean
$\beta(\cdot)$ exponential time (independent of $X$) die and produce
$k$ descendants with probability
 $p_k$ ($k=0$ means that the particle dies without generating descendants). Then, the descendants
perform independent It\^o diffusions on $\RR^d$ (with same generator
$\cal L$) from their birth locations, die and produce descendants
after a mean $\beta(\cdot)$ exponential times, etc. This process is
called a $d$-dimensional branching diffusion with a branching rate
$\beta(\cdot)$. $\beta$ can also depend spatially on $x$ or be
itself stochastic (Cox process). We note $Z_t \equiv \left(z_t^1 ,
\ldots, z_t^{N_t}\right) \in \RR^{d \times N_t}$ the locations of
the particles alive at time $t$ and $N_t$ the number of particles at
$t$ (see Fig. \ref{Fey} for examples with $2$ and $3$ descendants).
We consider then the multiplicative functional defined
by\footnote{$\prod^{N_T=0} \equiv 1$ by convention.} \bea
\hat{u}(t,x)=\EE_{t,x}\Big[ \prod_{i=1}^{N_T} \psi(z_T^i)\Big]
\label{mult} \eea where $\EE_{t,x}[\cdot]=\EE[\cdot|N_t=1,
z_t^1=x]$. Note that as $N_T$ can become infinite when
$m=\sum_{k=0}^\infty k p_k
>1$ (super-critical regime, see \cite{mel}), a sufficient condition on $\psi$ in
order to have a well-behaved product is $|\psi|<1$. Then $\hat{u}$
solves the semi-linear PDE (\ref{KPP}). This stochastic
representation can be understood as follows: Mathematically, by
conditioning on $\tau$, the first-time to jump of a Poisson process
with intensity $\beta(t)$, we get from (\ref{mult}) \beaa \hat{u
}(t,x)=\EE_{t,x}[1_{\tau  \geq T}\psi(z^1_T)] +\EE_{t,x}[ 1_{\tau<T}
\sum_{k=0}^\infty p_k \EE_\tau[ \prod_{j=1}^{k}
\prod_{i=1}^{N_T^j(\tau)} \psi(z_T^{i,j,z_\tau})] \eeaa where
$z_T^{i,j,z_\tau}$ is the position of the $i$-th particle at
maturity $T$ produced by the $j$-th particle generated at time
$\tau$. By using the independence and the strong Markov property, we
obtain \beaa \hat{u}(t,x)&=&\EE_{t,x}[1_{\tau \geq T}\psi(z^1_T)]
+\sum_{k=0}^\infty\EE_{t,x}[ 1_{\tau<T}  p_k \prod_{j=1}^{k}
\EE_\tau[ \prod_{i=1}^{N_T^j(\tau)}
\psi(z_T^{i,j,z_\tau})] \\
&=&\EE_{t,x}[1_{\tau \geq T}\psi(z^1_T)] +\EE_{t,x}[ 1_{\tau<T}
\sum_{k=0}^\infty
p_k \prod_{j=1}^{k} \hat{u}(\tau,z^1_\tau)] \\
&=&\EE_{t,x}[1_{\tau \geq T}\psi(z^1_T)] +\sum_{k=0}^\infty p_k
\EE_{t,x}[
 \hat{u}^k(\tau,z^1_\tau)1_{\tau<T}]  \\
 &=& \EE_{t,x}[ e^{-\int_t^T \beta(s)ds}
\psi(z_T^1)]+\int_t^T  \sum_{k=0}^\infty p_k \EE_{t,x}[\beta(s)
e^{-\int_t^s \beta(u)du} \hat{u}^k(s,z^1_s)]  ds  \eeaa Then, by
assuming that $||\psi||_\infty<1$, $\hat{u}$ is uniformly bounded by
$1$ in $[0,T] \times \RR^d$ and we get from the Feynman-Kac formula
that $\hat{u}$ is a viscosity solution to PDE (\ref{KPP}) (see
Theorem 6.4 in \cite{tou}). By assuming that PDE (\ref{KPP})
satisfies a comparison principle, we conclude that $u=\hat{u}$.

\no A first attempt in order to obtain a larger class of
non-linearities than those defined by (\ref{Poly}) is to consider an
infinite collection of branching diffusions, the so-called
super-diffusions. (\ref{Poly}) is then extended to \bea \Psi(u)=a u
+b u^2+\int_{0}^\infty n(dr)[ e^{-ru}-1+ru] \label{Poly1} \eea where
$a \geq 0$, $b \geq 0$ and $n$ is a Radon measure on $(0,\infty)$
satisfying $\int_0^\infty (r \wedge r^2) n(dr) <\infty$. The class
of non-linearity as defined by (\ref{Poly1}) is more general than
(\ref{Poly}), in particular contains $a u +b u^2$ with arbitrary
positive coefficients $a$ and $b$. Unfortunately, this requires a
large number of branching diffusions (as the default intensity
diverges) and the non-linearity is still restrictive. This leads us
to introduce a new class of branching diffusions that can be traced
back to Le Jan-Sznitman \cite{jan} in the context of stochastic
(Fourier) representations of solutions of the incompressible
Navier-Stokes equation.

\section{Marked branching diffusions}

\no The PDE (\ref{KPP}) should be compared with the semi-linear PDE
(\ref{toy}) arising in the pricing of counterparty risk. It seems
too restrictive and unreasonable to approximate the non-linearity
$u^+$ by a polynomial of type (\ref{Poly}) or even (\ref{Poly1}). A
natural question is therefore to search if this construction can be
generalized for an arbitrary polynomial for which the PDE is \bea
\partial_t u+{\cal L} u + \beta(F(u) -u)=0 \label{PDEpoly} \eea with
$F(u)=\sum_{k=0}^M a_k u^k$ an  $M$-order polynomial in $u$ that we
write for convenience $F(u)=\sum_{k=0}^M \left({a_k \over
p_k}\right) p_k u^k$. We will show below that this can be achieved
by counting the branching of each monomial $u^k$.

\no {\bf Assumption (Comp)}: In order to have  uniqueness in the
viscosity sense, we assume PDE (\ref{PDEpoly}) satisfies a
comparison principle for sub- and super-solutions (see \cite{fle}).

\no For each Galton-Watson tree, we denote $\omega_k \in \NN$ the
number of branching of monomial type $u^k$ with $k \in \{0, \ldots,M
\}$. The descendants are drawn with an arbitrary distribution $p_k$
- for example we can take  a uniform distribution $p_k={1 \over
M+1}$ (see an other choice in section \ref{sectionoptimal}). In Fig.
\ref{Fey}, we have drawn the diagrams for the non-linearity $F(u)={a
\over p_2} p_2 u^2 + {b \over p_3} p_3 u^3$ up to two defaults. \no
We then define the multiplicative functional:

\no{\bf Main formula}: \bea \hat{u}(t,x)=\EE_{t,x}\Big[
\prod_{i=1}^{N_T} \psi(z_T^i) \prod_{k=0}^M \left({a_k \over
p_k}\right)^{\omega_k} \Big] \;,\; \omega_k=\sharp\mathrm{branching
\; type} \; k\label{Rep1} \eea \no We state our main result (the
proof  is reported in the appendix):
\begin{thm} Let us assume that $\hat{u} \in  \mathrm{L}^\infty([0,T] \times \RR^d)$
and ({\bf Comp}) holds. The function $\hat{u}(t,x)$ is the unique
viscosity solution of (\ref{PDEpoly}). \label{thm}
\end{thm}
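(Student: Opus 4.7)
The plan is to adapt the McKean-type derivation already sketched in the excerpt for the KPP polynomial (\ref{Poly}) to the marked case, the only new ingredient being the bookkeeping of the weights $a_k/p_k$. First, I would condition on $\tau$, the first jump time of the Poisson process of intensity $\beta$ started at $t$, and on the branching type $k$, which is drawn with probability $p_k$ independently of $\tau$ and of the diffusion $z^1$. On $\{\tau \geq T\}$ no branching occurs and the functional reduces to $\psi(z^1_T)$; on the complementary event of type $k$ the tree spawns $k$ independent sub-trees from $z^1_\tau$, each carrying an independent copy of the functional, and the factor $a_k/p_k$ is picked up exactly once at this node.

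Next, using the strong Markov property of the diffusion and the independence of the sub-trees, the conditional expectation on $\{\tau<T, \mathrm{type}\ k\}$ factorises as $(a_k/p_k)\,\hat{u}(\tau,z^1_\tau)^k$. Summing over $k$ weighted by $p_k$ produces $\sum_k p_k (a_k/p_k)\,\hat{u}^k = F(\hat{u})$. Integrating against the exponential law of $\tau$ then yields the Feynman-Kac-type integral equation
\beaa
\hat{u}(t,x)=\EE_{t,x}[e^{-\beta(T-t)}\psi(z^1_T)]+\int_t^T \beta\, e^{-\beta(s-t)}\, \EE_{t,x}[F(\hat{u})(s,z^1_s)]\,ds.
\eeaa
The boundedness hypothesis $\hat{u}\in\mathrm{L}^\infty$ makes all expectations finite and legitimates the interchanges of sum, integral and expectation performed above (by dominated convergence), exactly as in the KPP computation already displayed in the excerpt.

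To conclude, I would invoke the viscosity version of Feynman-Kac (for instance Theorem 6.4 in \cite{tou}) applied to the linear PDE $\partial_t v + {\cal L} v - \beta v + g = 0$ with driver $g \equiv \beta F(\hat{u})$ and terminal datum $\psi$: a bounded function satisfying the above integral equation is a viscosity solution of (\ref{PDEpoly}). Uniqueness is then immediate from the comparison principle built into assumption (\textbf{Comp}), since any other bounded viscosity solution would be simultaneously a sub- and a super-solution and hence coincide with $\hat{u}$.

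The main obstacle I anticipate is not algebraic but measure-theoretic: one must ensure that the Galton-Watson tree does not blow up the multiplicative functional, so that the product $\prod_{i} \psi(z_T^i)\prod_k (a_k/p_k)^{\omega_k}$ is genuinely integrable and the Markov/independence manipulations are justified. In the classical KPP setting this was ensured by the constraints $\sum p_k = 1$, $p_k\ge 0$ and $\|\psi\|_\infty<1$; here the marks $a_k/p_k$ can be arbitrarily large, so finiteness is not automatic and is precisely what the $\mathrm{L}^\infty$ hypothesis secures, allowing an otherwise routine repetition of the KPP argument.
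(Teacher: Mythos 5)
Your proposal is correct and follows essentially the same route as the paper's own proof: condition on the first branching time $\tau$ and the branching type $k$, use independence of sub-trees and the strong Markov property to collapse the product into $(a_k/p_k)\,\hat{u}(\tau,z^1_\tau)^k$, sum to recover $F(\hat{u})$, derive the Feynman--Kac integral equation, invoke Theorem 6.4 of \cite{tou} for the viscosity characterization, and close with assumption (\textbf{Comp}) for uniqueness. Your closing remark on why the $\mathrm{L}^\infty$ hypothesis is what legitimizes the Fubini/Markov manipulations is also consistent with the paper's (equally terse) treatment of that point.
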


\begin{figure}
\begin{center}
\includegraphics[width=8cm,height=8cm]{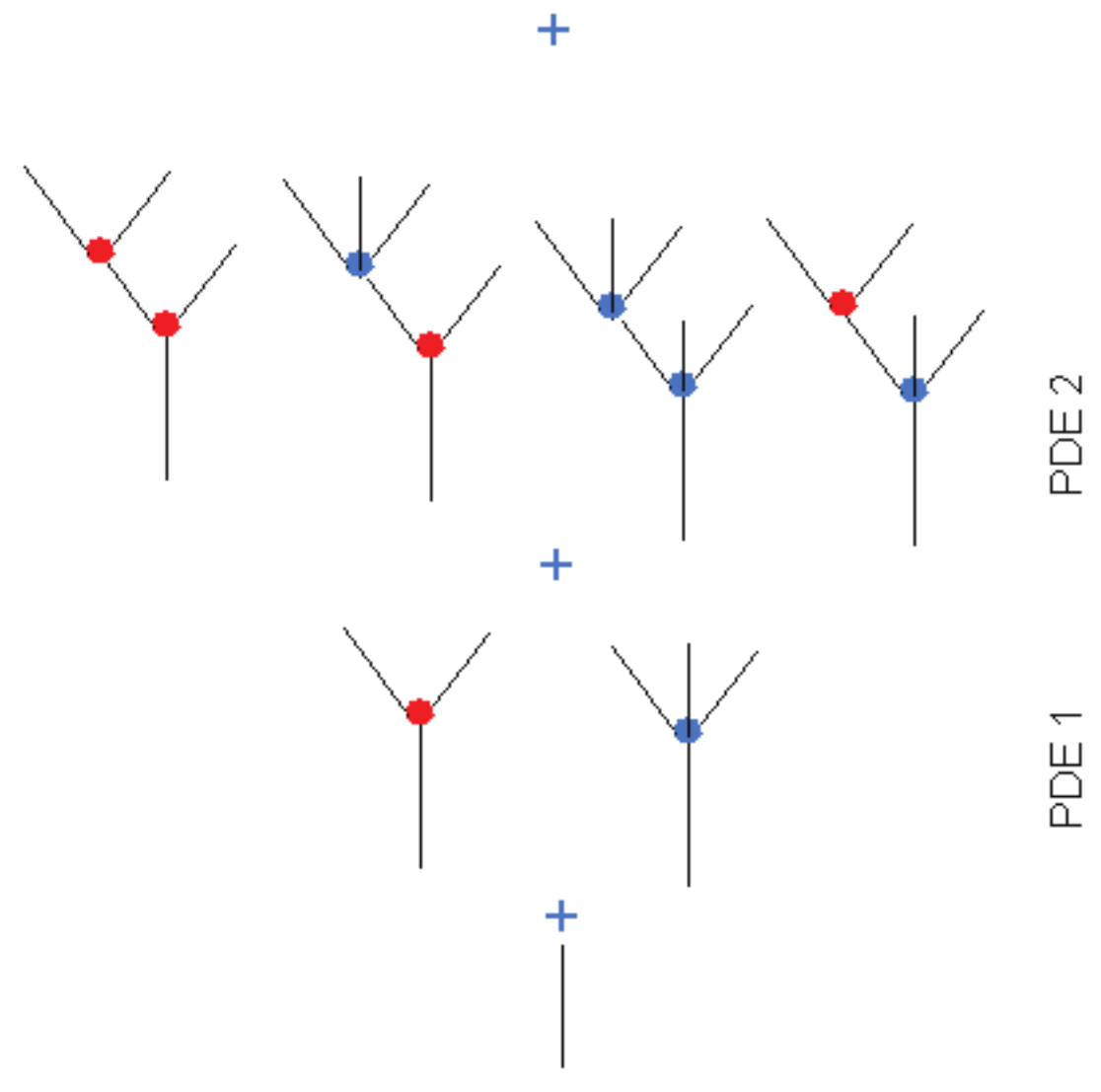}
\end{center}
  \caption{Marked Galton-Watson random tree for the non-linearity $F(u)={a \over p_2} p_2 u^2 + {b \over p_3} p_3 u^3$. The red (resp. blue) vertex corresponds to the weight
  ${a \over p_1}$ (resp. ${b \over p_2}$). The diagram with two red vertices has the weights $(\omega_1=2, \omega_2=0)$. } \label{Fey}
\end{figure}

\subsection*{Diagrammar interpretation} \no From Feynman-Kac's
formula, we have \bea u(t,x)=\EE_{t,x}[ 1_{\tau \geq T}\psi(X_T)]+
\EE_{t,x}[ F(u(\tau,X_\tau)) 1_{\tau<T}] \label{recursive} \eea \no
This integral equation can be recursively solved in terms of
multiple exponential random times $\tau_i$: \bea u(t,x)&=&\EE_{t,x}[
1_{\tau_0 \geq T}\psi(X_T)] \nonumber
\\&+& \EE_{t,x}[ F\left(\EE_{\tau_0}[ 1_{\tau_1\geq  T}\psi(X_T)]+
\EE_{\tau_0}[ F(\EE_{\tau_1}[ 1_{\tau_2\geq
T}\psi(X_T)])1_{\tau_1<T}]\right)1_{\tau_0<T}] + \cdots
\label{series} \eea Each term  can be interpreted as a Feynman
diagram (see Fig. \ref{Fey}) representing the trajectory of a
branching diffusion with a weight depending on the branching of each
monomial. For example in Fig. \ref{Fey}, the diagram with two red
vertices corresponds to \beaa \left({a_2 \over p_2}\right)^2
\EE_{t,x}[ 1_{\tau_0<T}\EE_{\tau_0}[1_{\tau_1 \geq T} \psi(X_T)]
\EE_{\tau_0}[1_{\tau_2<T}\EE_{\tau_2}[1_{\tau_3 \geq T}
\psi(X_T)]^2]]\eeaa \no By assuming that the series (\ref{series})
is convergent, one can guess that the solution is given by our
multiplicative functional (\ref{Rep1}).

\no In the next section, we focus on convergence issues and deduce a
sufficient condition to ensure that $\hat{u} \in
\mathrm{L}^\infty([0,T] \times \RR^d)$ if $\psi$ is bounded.

\subsection{Convergence issues}
\no The number of particles $N(\omega)$, produced by the branching
$\omega \equiv (\omega_0, \ldots, \omega_M)$, is \bea
N(\omega)=\sum_{k=0}^M (k-1) \omega_k +1 \eea The probability of
such a configuration satisfies the recurrence equation \bea
\PP(T|\omega)= \beta \sum_{k=0}^M&& \int_0^T dt \PP(t|\omega_0,
\ldots, \omega_k -1,
\ldots, \omega_M)N(\omega_0, \ldots, \omega_k -1, \ldots, \omega_M) \nonumber \\
&&p_k e^{-\beta k(T-t)} e^{-\beta (T-t)(N(\omega_0, \ldots, \omega_k
-1, \ldots, \omega_M)-1)} \label{rec}  \eea \no Indeed, if we have a
tree with a branching $(\omega_0, \ldots, \omega_k -1, \ldots,
\omega_M)$ at time $t$, a particle among the $N(\omega_0, \ldots,
\omega_k -1, \ldots, \omega_M)$ particles must die and produce $k$
descendants (with probability $p_k \beta e^{-k \beta (T-t)}$). The
remaining $N(\omega_0, \ldots, \omega_k -1, \ldots, \omega_M)-1$
particles must survive until maturity $T$ (with probability $
e^{-\beta (T-t)(N(\omega_0, \ldots, \omega_k -1, \ldots,
\omega_M)-1)} $).

\no We prove in the appendix that the Laplace transform of $\PP$,
$\hat{\PP}(T,c)=\EE[\prod_{k=0}^M e^{-c_k \omega_k}]$, satisfies the
equation  \bea &&\int_1^{\hat{\PP}(T,c)} {ds \over -s+ \sum_{k=0}^M
p_k e^{-c_k} s^{k} }=\beta T\;\;\mathrm{if}\;
\sum_{k=0}^M p_k e^{-c_k} \neq 1  \label{Laplace}\\
&&\hat{\PP}(T,c)=1\;\;\mathrm{if}\;\sum_{k=0}^M p_k e^{-c_k} =1 \eea
\no In the particular case of one branching type $k \neq 1$, we have
\beaa \hat{\PP}(T,c_k)={e^{c_k \over k-1} \over \left( 1-e^{\beta
T(k-1)}+e^{c_k+\beta T(k-1)} \right)^{1\over k-1}} \eeaa By assuming
that $\psi \in \mathrm{L}^\infty(\RR^d)$, the expectation in
(\ref{Rep1}) can then be bounded by \bea |\hat{u}(0,x)|& \leq
&\EE_{0,x}[ \prod_{k=0}^M \left({|a_k| \over p_k}\right)^{\omega_k}
||\psi||_\infty^{N(\omega)} ] = ||\psi||_\infty \hat
{\PP}\left(T,-\ln {|a_k|  \over p_k}-\ln ||\psi||_\infty^{k-1}
\right) \label{boundedu} \eea \no \no  from which we deduce  a
sufficient condition for convergence:
\begin{prop} Let us
assume that $\psi \in \mathrm{L}^\infty(\RR^d)$. Set
$p(s)=\beta\left( -s+ \sum_{k=0}^M |a_k| ||\psi||_\infty^{k-1}
s^{k}\right)$.
\begin{enumerate}
\item Case $\sum_{k=0}^M |a_k| ||\psi||_\infty^{k-1}>1$: We have $\hat{u} \in \mathrm{L}^\infty([0,T] \times \RR^d)$
(as defined by (\ref{Rep1})) if there exists $X \in \RR_+^*$ such
that \beaa \int_{1}^X {ds \over p(s)}= T \eeaa \no In the particular
case of one branching type $k$, the sufficient condition for
convergence reads as  \beaa |a_k| ||\psi||_\infty^{k-1} \left(
1-e^{-\beta T(k-1)} \right) < 1\eeaa
\item Case $\sum_{k=0}^M |a_k| ||\psi||_\infty^{k-1} \leq 1$: $\hat{u} \in  \mathrm{L}^\infty([0,T] \times \RR^d)$ for all $T$.
\end{enumerate}
\label{prop}
\end{prop}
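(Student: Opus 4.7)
The plan is to turn the boundedness of $\hat u$ into a question about the scalar integral equation (\ref{Laplace}) at a well-chosen Laplace parameter. Starting from (\ref{boundedu}), I set $\alpha_k:=|a_k|\|\psi\|_\infty^{k-1}$ and $c^*_k:=-\ln(\alpha_k/p_k)$, so that $p_k e^{-c^*_k}=\alpha_k$. Substituting into (\ref{Laplace}) reduces the defining equation for $\hat{\PP}(T,c^*)$ to
\[
\int_1^{\hat{\PP}(T,c^*)}\frac{ds}{p(s)}=T,\qquad p(s)=\beta\Bigl(-s+\sum_{k=0}^M\alpha_k s^k\Bigr),
\]
the polynomial appearing in the statement. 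Since (\ref{boundedu}) gives $|\hat u(0,x)|\le\|\psi\|_\infty\,\hat{\PP}(T,c^*)$, it suffices to check that under each hypothesis this scalar equation admits a solution $\hat{\PP}(T,c^*)\in\RR_+^*$.

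For Case 1, $\sum_k\alpha_k>1$ is exactly $p(1)>0$, so $p$ stays positive on some maximal right-interval $[1,s^*)$. The map $X\mapsto\int_1^X ds/p(s)$ is then continuous, strictly increasing, and starts from $0$, so by the intermediate value theorem it attains the value $T$ at some $X\in\RR_+^*$ iff the stated condition holds, giving $\hat{\PP}(T,c^*)=X<\infty$. For the single-branching corollary I would substitute $c^*=-\ln\alpha_k$ into the explicit closed form for $\hat{\PP}$ recorded just after (\ref{Laplace}); a short rearrangement shows its denominator reads $\bigl(1+(1/\alpha_k-1)e^{\beta T(k-1)}\bigr)^{1/(k-1)}$, which is positive precisely when $\alpha_k(1-e^{-\beta T(k-1)})<1$, matching the stated criterion.

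For Case 2, the equality subcase $\sum_k\alpha_k=1$ lands on the second branch of (\ref{Laplace}) and yields $\hat{\PP}=1$ directly. If $\sum_k\alpha_k<1$ then $p(1)<0$, and I would integrate backwards: $p$ is strictly negative on a maximal interval $(s_-,1]$, where $s_-\in[0,1)$ is the largest real zero of $p$ below $1$ (or $s_-=0$ if no such zero exists, which forces $\alpha_0=0$ and makes $p(s)\sim-\beta s$ near the origin). Rewriting $\int_1^X ds/p(s)=\int_X^1 ds/|p(s)|$, this is continuous and strictly increasing on $(s_-,1]$, vanishing at $X=1$ and diverging to $+\infty$ as $X\downarrow s_-$ (logarithmic divergence at a simple polynomial zero, stronger at multiple zeros, or from $1/s$ at the origin). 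Hence $\hat{\PP}(T,c^*)\in(s_-,1)$ exists for every $T\ge 0$ and $\hat u\in\mathrm{L}^\infty$ follows. The main technical care goes into this backward integration in Case 2---locating $s_-$ and confirming the tail divergence there; a minor subtlety is that $c^*_k$ is only well-defined when $p_k>0$ on the support $\{k:|a_k|>0\}$ of $F$, which one secures by choosing the auxiliary distribution $(p_k)$ accordingly.
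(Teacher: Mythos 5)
Your reduction is exactly the one the paper has in mind: specialize the Laplace parameter to $c^*_k=-\ln(\alpha_k/p_k)$ so that $p_k e^{-c^*_k}=\alpha_k$, whereupon formula (\ref{Laplace}) becomes $\int_1^{\hat{\PP}}ds/p(s)=T$ with exactly the polynomial $p$ of the proposition, and combine with (\ref{boundedu}). The paper never writes out this deduction (the proposition is stated as an immediate consequence of (\ref{Laplace})--(\ref{boundedu})), and your proposal correctly reconstructs it, including the rearrangement of the single-branching closed form into $\alpha_k(1-e^{-\beta T(k-1)})<1$, the split of Case 2 into the $\sum\alpha_k=1$ branch (where $\hat{\PP}\equiv 1$) and the $\sum\alpha_k<1$ branch (where the backward integral to the nearest zero $s_-$ of $p$ below $1$ diverges, giving a finite $\hat{\PP}\in(s_-,1)$ for every $T$), and the observation that $(p_k)$ must place mass on the support of $F$. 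Your phrasing ``it attains $T$ at some $X$ iff the stated condition holds'' is circular as written — the content is that existence of such $X$ is equivalent to the scalar ODE $dU/dt=p(U)$, $U(0)=1$, not blowing up before $T$, hence to $\hat{\PP}(T,c^*)<\infty$ — but the underlying reasoning is sound. One point worth making explicit: the conclusion is $\hat u\in\mathrm L^\infty([0,T]\times\RR^d)$, while (\ref{boundedu}) only bounds $\hat u(0,\cdot)$; the uniform-in-$t$ bound follows because $\hat{\PP}(T',c^*)$ is monotone increasing in $T'$ (in Case 1) or bounded by $1$ (in Case 2), so the bound at horizon $T$ dominates all $T'\leq T$.
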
 \no Note that our blow-up criteria does not depend on the
probabilities $p_k$ as expected.

\subsection{PDE (\ref{toy2})}
We assume that the function $(1-R)x^++Rx$ can be well approximated
by a polynomial $F(x)$ (see section \ref{cva}) and we consider the
PDE \beaa &&\partial_t u(t,x) + {\cal L} u(t,x) +{\beta  \over
1-R}\left( F(\EE_{t,x}[\psi(X_T)])-u(t,x) \right) =0,\quad
u(T,x)=\psi(x) \eeaa \no From Feynman-Kac's formula, we have \beaa
u(t,x)=\EE_{t,x}[ 1_{\tau \geq T}\psi(X_T)]+\EE_{t,x}[
F(\EE_\tau[\psi(X_T)] ) 1_{\tau<T}] \eeaa with $\tau$ a Poisson
default time with intensity $\beta/(1-R)$. As compared to the
previous section, we have the
 term $\EE_{t,x}[
F(\EE_\tau[\psi(X_T)] ) 1_{\tau<T}]$ instead of $\EE_{t,x}[
F(u(\tau,X_\tau)) 1_{\tau<T}]$. This term can be computed using the
previous algorithm by imposing that the particle can default only
once. This corresponds to the first three diagrams in Fig.
(\ref{Fey}). As $N_T$ is valued in $[0,M]$, our formula (\ref{Rep1})
is convergent here for all polynomial non-linearities.

\no As a conclusion, without any modification, the branching
particle algorithm can solve the two PDEs (\ref{toy})-(\ref{toy2})
modulo that the non-linearly $u^+$ can be fairly well approximated
by a polynomial.

\subsection{Optimal probabilities $p_k$} \label{sectionoptimal}
\no Is there a better choice than an uniform distribution $p_k={1
\over M+1}$ for improving the convergence?

\no For the PDE (\ref{toy}), the variance of the algorithm
(depending on the probabilities $p_k$) is bounded  by (see Equation
(\ref{boundedu})) \beaa ||\psi||_\infty \hat {\PP}\left(T,-2\ln
{|a_k| \over p_k}-2 \ln ||\psi||_\infty^{k-1} \right) \eeaa \no By
minimizing with respect to $p_k$, we get \bea p_k = {|a_k|
||\psi||_\infty^{k} \over \sum_{i=0}^M |a_i| ||\psi||_\infty^{i}}
\label{optimalp1} \eea \no Similarly, for the PDE (\ref{toy2}), the
variance (depending on the probabilities $p_k$) is bounded  by \beaa
\sum_{k=0}^M {a_k^2 \over p_k} ||\psi||_\infty^{2k} \beta T
e^{-\beta T} \eeaa \no By minimizing with respect to $p_k$, we get
also (\ref{optimalp1}).

\no We recall that the population in the Galton-Watson tree
disappears in finite time almost surely if $m \equiv \sum_{k=0}^M k
p_k \leq 1$ (see \cite{mel}). In the super-critical case $m>1$, the
population explodes at a finite time $T_\mathrm{exp}$ with
probability $1-s_0$ where $s_0=\inf \{ s \in [0,1], \sum_{k=0}^M p_k
s^k = s \}$. From (\ref{optimalp1}), we are in the super-critical
case if $\sum_{k=0}^M (k-1) |a_k| ||\psi||_\infty^k >0$.

\subsection{Numerical Experiments}

\no Before applying our algorithm to the problem of credit valuation
adjustment, we check it on polynomials which  do not belong to the
classes defined by (\ref{Poly}) and (\ref{Poly1}).

\subsubsection{Experiment 1}

\no We have implemented our algorithm for the two PDE types \beaa
\partial_t u+{\cal L} u + \beta(F(u) -u)=0,\quad u(T,x)=1_{x>1}  \;: \; \mathrm{PDE}2\eeaa and
\beaa
\partial_t u+{\cal L} u + \beta(F(\EE_{t,x}[1_{X_T>1}]) -u)=0,\quad u(T,x)=1_{x>1}  \;: \; \mathrm{PDE}1 \eeaa \no
with $F(u)={1 \over 2}\left(u^3-u^2\right)$.   ${\cal L}$ is the
It\^o generator of a geometric Brownian motion with a volatility
$\sigma_\mathrm{BS}=0.2$ and the Poisson intensity is $\beta=0.05$.
In financial terms, this corresponds to a CDS spread around $500$
basis points. The maturity is $T=10$ years. From (\ref{optimalp1}),
we note that our optimal probability distributions for PDE1 and PDE2
coincide with the uniform distribution. Moreover Proposition
(\ref{prop}) gives that the solution does not blow up.

\no The numerical method has been checked against a one-dimensional
PDE solver with a fully implicit scheme (see Table. \ref{BMMTest1})
for which we find $u=21.82\%$ (PDE1) and  $u=21.50\%$ (PDE2). Note
that this algorithm converges as expected and the error is properly
indicated by the Monte-Carlo standard deviation estimator (see
column Stdev).

\begin{table}\begin{center}
\begin{tabular}{|c|c|c|c|c|}
\hline N & Fair(PDE2) & Stdev(PDE2) & Fair(PDE1) & Stdev(PDE1)
\\ \hline
$12$ & $20.78$ & $0.78$ & $21.31$ & $0.79$ \\ \hline

$14$ & $22.25$ & $0.39$ & $21.37$ & $0.39$ \\ \hline

$16$ & $21.97$ & $0.19$ & $21.76$ & $0.20$ \\ \hline

$18$ & $21.90$ & $0.10$ & $21.51$ & $0.10$ \\ \hline

$20$ & $21.86$ & $0.05$ & $21.48$ & $0.05$ \\
\hline $22$ & ${\bf 21.81}$ & $0.02$ & ${\bf 21.50}$ & $0.02$
\\ \hline
\end{tabular}
\end{center}
\caption{MC price quoted in percent as a function of the number of
MC paths $2^N$. PDE pricer(PDE1) = ${\bf 21.82}$. PDE pricer(PDE2) =
${\bf 21.50}$. Non-linearity $F(u)={1 \over 2}\left(u^3-u^2\right)$.
} \label{BMMTest1}
\end{table}

\subsubsection{Experiment 2} \no Same test with $F(u)={1 \over 3}
\left(u^3-u^2-u^4\right)$ (see Table. \ref{BMMTest2}) and same
comments as above.

\begin{table}
\begin{center}
\begin{tabular}{|c|c|c|c|c|}
\hline N & Fair(PDE2) & Stdev(PDE2) & Fair(PDE1) & Stdev(PDE1)
\\ \hline
$12$ & $21.14$ & $0.78$& $20.00$ & $0.78$ \\ \hline

$14$ & $21.56$ & $0.38$ & $19.90$ & $0.39$ \\ \hline

$16$ & $21.62$ & $0.19$ & $20.25$ & $0.20$ \\ \hline

$18$ & $21.31$ & $0.10$ & $20.39$ & $0.10$ \\ \hline

$20$ & $21.38$ & $0.05$ & $20.36$ & $0.05$ \\ \hline

$22$ & ${\bf 21.36}$ & $0.02$ & ${\bf 20.40}$ & $0.02$ \\ \hline
\end{tabular}
\end{center}
\caption{MC price quoted in percent as a function of the number of
MC paths $2^N$.  PDE pricer(PDE1) = ${\bf 21.37}$. PDE pricer(PDE2)
= ${\bf 20.39}$. Non-linearity $F(u)={1 \over 3}
\left(u^3-u^2-u^4\right)$. } \label{BMMTest2}
\end{table}

\subsubsection{Experiment 3: Blow-up explosion}
\no It is well-known that the semi-linear PDE in $\RR^d$ \beaa
\partial_t u +{\cal L}u+u^2=0  \eeaa blows up in finite time
if $d\leq 2$ for any bounded positive payoff (see \cite{sug}). We
deduce that the PDE with the non-linearity $F(u)=u^2+u$ blows up in
finite time ($T_{\max}$) in one dimension. Using Proposition
(\ref{prop}), our sufficient condition reads as \beaa T_{\max}
||\psi||_\infty < 1 \eeaa \no We have verified this explosion when
the maturity $T$ is greater than $1$ year (in our case
$\psi=1_{x>0}$, $||\psi||_\infty=1$) using  our algorithm (and a PDE
solver as a benchmark). Note that for $T=1$, the algorithm  starts
to blow up (see Stdev = $0.49$). A different stochastic
representation can be obtained by setting $u=e^{(T-t)}v$. We get
\beaa
\partial_t v +{\cal L}v+e^{(T-t)}v^2-v=0  \;,\; v(T,x)=\psi(x)\eeaa
and this can be interpreted as a binary tree with a weight
$e^{(T-\tau)}$. Our stochastic representation gives then \bea
u(t,x)=e^{T-t} \EE_{t,x}\Big[ \prod_{i=1}^{N_T} \psi(z_T^i)
e^{\sum_{i=1}^{\sharp\mathrm{branching}}(T-\tau_i)} \Big]
\label{length} \eea where $\tau_i$ is the time where the $i$-th
branching appears. This representation (\ref{length}) appears in
\cite{mim} and was used to reproduce Sugitani's blow-up criteria
\cite{sug}.

\begin{table}
\begin{center}
\begin{tabular}{|c|c|c|}
\hline Maturity(Year) &  BBM alg.(Stdev) & PDE
\\ \hline
$0.5$ &  $71.66(0.09)$ & $71.50$ \\ \hline

$1$ &  $157.35(0.49)$ & $157.17$ \\ \hline

$1.1$ &  $\infty(\infty)$ & $\infty$ \\ \hline

\end{tabular}
\end{center}
\caption{MC price quoted in percent as a function of the maturity
for the non-linearity $F(u)=u^2+u$. $\psi(x) \equiv 1_{x>1}$.}
\label{Blowup}
\end{table}

\section{Credit valuation adjustment algorithm} \label{cva}
\no In the previous section, we have assumed that the payoff was
bounded: $\psi \in \mathrm{L}^\infty$. Then, the solution $u$ can
then be written as $v={u \over ||\psi||_\infty}$ where $v$ satisfies
\bea
\partial_t v +{\cal L}v+ \beta \left( v^+- v\right)=0,\quad ||v(T,\cdot)|| \leq 1
 \label{PDEvscaled}  \eea Therefore, by re-scaling, we can consider that the payoff
satisfies the condition $||\psi||_\infty \leq 1$. The condition
$\psi \in \mathrm{L}^\infty$ can be easily relaxed as observed in
(\cite{fah}, see Remark 3.7). Let $\psi$ be a payoff with
$\alpha$-exponential growth for some $\alpha>0$. We scale the
solution by an arbitrary smooth positive function $\rho$ given by
\beaa \rho(x)&\equiv& e^{\alpha
|x|} \; \mathrm{for} \; |x| \geq M \\
\tilde{v}(t,x) &\equiv& \rho^{-1}(x) v(t,x)\eeaa If we write the
linear operator $\cal L$ as ${\cal L}v=\mu(t,x) \partial_x v + {1
\over 2} \sigma^2(t,x) \partial_x^2 v$, then $\tilde{v}$ satisfies a
PDE\footnote{$\tilde{\cal L}$ is written in $d=1$. A similar
expression can be obtained in a multi-dimensional setup.} with the
same non-linearity $ \beta {v}^+$: \beaa
\partial_t \tilde{v} +\tilde{\cal L}\tilde{v}+ \beta \left(\tilde{v}^+-\tilde{v}\right)=0
\eeaa with $\tilde{\cal L} \tilde{v} =\left( \mu +\sigma^2 \rho^{-1}
\partial_x \rho \right) \partial_x \tilde{v}+ {1
\over 2} \sigma^2(t,x) \partial_x^2 \tilde{v}+\left( \mu \rho^{-1}
\partial_x \rho +{1 \over 2} \rho^{-1} \sigma^2
\partial_x^2 \rho \right)\tilde{v}$.

\no What remains to be done in order to use (\ref{Rep1}) is to
approximate $v^+$ by a polynomial $F(v)$: \bea
\partial_t v +{\cal L}v +\beta \left( F(v)-v \right)=0,\quad v(T,x)=\psi(x) \label{PDEv}
\eea \no In our numerical experiments, we take (see Fig.
\ref{Approxu}) \bea F(u)=0.0589 +0.5 u +0.8164 u^2 -0.4043 u^4
\label{choiceu} \eea Proposition \ref{prop} gives that the solution
does not blow up  if $\beta T<0.50829$ (Take $X=\infty$ with
$||\psi||_\infty=1$). Moreover, as a numerical check of
(\ref{boundedu}), we have computed using a PDE solver the solution
of (\ref{PDEv}) with $\psi(x)=1$, $\tilde{F}(u)=0.0589 +0.5 u
+0.8164 u^2 +0.4043 u^4$, $\beta=0.05$ and $T=10$ years. The
solution $X=\hat{\PP}\left(T,-\ln {|a_k| \over p_k} \right)$
coincides with our upper bound in (\ref{boundedu}) and should
satisfy \bea \int_{1}^X {ds \over -s+0.0589 +0.5 u +0.8164 u^2
+0.4043 u^4}= 0.5 \label{identity} \eea We found $X=4.497$ (PDE
solver) and the reader can check that this value satisfies the above
identity (\ref{identity}) as expected.

\begin{figure}
\begin{center}
\includegraphics[width=8cm,height=8cm]{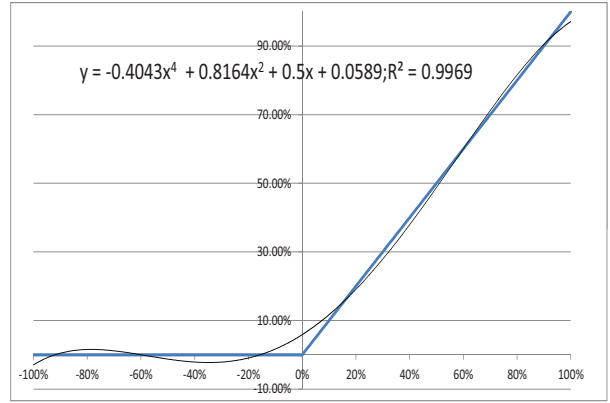}
\end{center}
  \caption{$u^+$ versus its polynomial approximation on $[-1,1]$.} \label{Approxu}
\end{figure}

\subsection{Algorithm: Final recipe}
The algorithm for solving PDEs (\ref{toy})-(\ref{toy2}) can be
described by the following steps:
\begin{enumerate}
\item Choose a polynomial approximation of $u^+ \simeq \sum_{k=0}^M
a_k u^k$ on the domain $[-1,1]$.
\item Simulate the assets and the Poisson default time with intensity $\beta$ (resp. ${\beta \over 1-R}$) for PDE2 (resp. PDE1).
Note that the intensity $\beta$ can be stochastic (Cox process),
usually calibrated to default probabilities implied from CDS market
quotes.
\item At each default time, produce $k$ descendants with probability
$p_k$ (given by (\ref{optimalp1})). For PDE type $2$, descendants,
produced after the first default, become immortal.
\item Evaluate for each particle alive the payoff \beaa &&  \prod_{i=1}^{N_T}
\psi(z_T^i) \prod_{k=0}^M \left({a_k \over p_k }\right)^{\omega_k}
 \;,\; {\bf \mathrm{PDE2}} \\
&&  \prod_{i}^{N_T \in [0,M]} \psi(z_T^i) \left({a_1(1-R)+R \over
p_1 }\right)^{\omega_1} \prod_{k \neq 1}^M \left({a_k (1-R)\over p_k
}\right)^{\omega_k}
 \;\;(\mathrm{here},\; \sum_{k=0}^M \omega_k =0 \; \mathrm{or} \; 1
 ) \;,\; {\bf \mathrm{PDE1}} \eeaa where $\omega_k$ denotes the number of branching type $k$. We should
 highlight that the algorithm for $\mathrm{PDE1}$ is always
 convergent for all $T$ whatever condition on the payoff as the multiplicative
 functional involves at most $M$ particles.
\end{enumerate}
\begin{Remark} \label{colla}
In the case of collateralized positions, the non-linearity $u_t^+$
should be substituted with $(u_t-u_{t+\Delta})^+$ where $\Delta$ is
a delay. Using our polynomial approximation, we get
$F(u_t-u_{t+\Delta})$. By expanding this function, we get monomials
of the form $\{u_t^p u_{t+\Delta}^q\}$. Our algorithm can then be
easily extended to handle this case. At each default time $\tau$, we
produce $p$ descendants starting at $(\tau,X_\tau)$ and $q$
descendants starting at $(\tau+\Delta,X_{\tau+\Delta})$.
\end{Remark}

\no A natural question is to characterize the error of the algorithm
as a function of the approximation error of $u^+$ by $F(u)$. Using
the parabolicity of the semi-linear PDE, we can characterize the
bias of our algorithm (the proof is reported in the appendix):
\begin{prop} Let us assume that $\underline{F}(v)$ and
$\overline{F}(v)$ are two polynomials satisfying ({\bf Comp}), the
sufficient condition in Prop. \ref{prop} for a maturity $T$ and
\beaa \underline{F}(x) \leq x^+ \leq \overline{F}(x) \eeaa \no We
denote $\underline{v}$ and $\overline{v}$ the corresponding
solutions of (\ref{PDEv}) and v the solution of
($\ref{PDEvscaled}$). Then \beaa \underline{v} \leq v \leq
\overline{v} \eeaa \label{propbias}
\end{prop}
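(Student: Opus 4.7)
The plan is to prove the two inequalities by a standard viscosity sub-/super-solution argument applied to the PDE (\ref{PDEvscaled}) governing $v$. Specifically, I will show that $\underline{v}$ is a viscosity subsolution of (\ref{PDEvscaled}) and $\overline{v}$ is a viscosity supersolution of (\ref{PDEvscaled}), and then invoke assumption \textbf{(Comp)} to conclude $\underline{v}\leq v\leq\overline{v}$. All three functions share the same terminal data $\psi$, so the comparison principle will yield the pointwise ordering on $[0,T]\times\RR^d$.

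First I would carry out the subsolution step. By Proposition \ref{prop} and Theorem \ref{thm}, $\underline{v}$ is bounded and is the unique viscosity solution of
\[
\partial_t \underline{v}+{\cal L}\underline{v}+\beta\bigl(\underline{F}(\underline{v})-\underline{v}\bigr)=0,\quad \underline{v}(T,x)=\psi(x).
\]
Let $\phi\in C^{1,2}$ be a test function such that $\underline{v}-\phi$ attains a local maximum at $(t_0,x_0)$ with $\underline{v}(t_0,x_0)=\phi(t_0,x_0)$. The viscosity-subsolution property of $\underline{v}$ for its own PDE gives
\[
\partial_t\phi(t_0,x_0)+{\cal L}\phi(t_0,x_0)+\beta\bigl(\underline{F}(\phi(t_0,x_0))-\phi(t_0,x_0)\bigr)\geq 0.
\]
Since $\underline{F}(y)\leq y^+$ for every $y\in\RR$, in particular at $y=\phi(t_0,x_0)$, I obtain
\[
\partial_t\phi(t_0,x_0)+{\cal L}\phi(t_0,x_0)+\beta\bigl(\phi(t_0,x_0)^+-\phi(t_0,x_0)\bigr)\geq 0,
\]
which says exactly that $\underline{v}$ is a viscosity subsolution of (\ref{PDEvscaled}). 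An entirely symmetric argument, using $\overline{F}(y)\geq y^+$ and test functions touching from below, shows that $\overline{v}$ is a viscosity supersolution of (\ref{PDEvscaled}).

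Finally, assumption \textbf{(Comp)} — the comparison principle for sub- and super-solutions of the semi-linear equation driven by the Lipschitz nonlinearity $\beta(v^+-v)$ — applied twice with the common terminal condition $\psi$ gives $\underline{v}\leq v$ and $v\leq\overline{v}$.

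The only delicate point is to make sure the comparison principle is available for (\ref{PDEvscaled}) in the class in which $\underline{v},v,\overline{v}$ live; this is essentially the content of \textbf{(Comp)} together with the Lipschitz regularity of the map $v\mapsto \beta(v^+-v)$, so no new ingredient beyond \cite{fle} is needed. Once that is in hand, the algebraic inequalities $\underline{F}\leq (\cdot)^+\leq \overline{F}$ transfer directly to the viscosity inequalities via the test-function formulation above, and the conclusion is immediate.
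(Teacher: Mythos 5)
Your proof is correct, but it takes a genuinely different route from the paper's. You run a viscosity sub-/super-solution argument: since $\underline{F}\leq(\cdot)^+\leq\overline{F}$ pointwise, you observe that $\underline{v}$, which is the viscosity solution of its own PDE with nonlinearity $\underline{F}$, automatically satisfies the subsolution inequality for (\ref{PDEvscaled}) at any contact point of a test function touching from above (and symmetrically $\overline{v}$ is a supersolution), after which the comparison principle for the $u^+$ equation delivers the ordering. The paper instead linearizes the difference: it forms $\delta=\overline{v}-v$, writes the PDE it satisfies by factoring $\overline{v}^+-v^+=\bigl(\tfrac{\overline{v}^+-v^+}{\overline{v}-v}\bigr)\mathbf{1}_{v\neq\overline{v}}\,\delta$, uses that this difference-quotient coefficient is bounded (since $y\mapsto y^+$ is $1$-Lipschitz), and applies Feynman--Kac to represent $\delta$ as $\int_t^T\beta\,\EE_{t,x}\bigl[(\overline{F}(\overline{v})-\overline{v}^+)\,e^{-\beta\int_t^s r_u\,du}\bigr]\,ds\geq 0$, with the sign coming from $\overline{F}(\cdot)\geq(\cdot)^+$. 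Your approach is arguably cleaner in that it never requires writing a PDE for the difference or invoking a Feynman--Kac representation (which implicitly needs regularity of $\overline{v}$ and $v$); it operates entirely within the viscosity framework the paper already set up. The one thing to be careful about, which you do flag, is that assumption \textbf{(Comp)} as literally stated concerns the polynomial equation (\ref{PDEpoly}); your argument needs the comparison principle for the equation driven by the Lipschitz map $v\mapsto\beta(v^+-v)$, i.e.\ for (\ref{PDEvscaled}) itself. That is a standard fact for such Lipschitz semilinear equations and is clearly intended to be available in this paper, but strictly speaking it is an additional (mild) hypothesis beyond the statement of \textbf{(Comp)}. Conversely, the paper's Feynman--Kac route avoids this by reducing the question to the sign of an explicit integrand, at the cost of needing enough regularity to justify that representation.
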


\no A similar result can be found for PDE (\ref{toy2}). In the case
of American options, our algorithm gives robust lower and upper
bounds.

\subsection{Complexity}

\no By approximating $u^+$  with an infinite high-order polynomial -
say $N_2$ - our algorithm converges towards the brute force
``Monte-Carlo of Monte-Carlo"  method with a complexity $O(N_1
\times N_2)$. By comparison, with our choice (\ref{choiceu}), the
complexity is at most $O(4N_1)$ for PDE type (\ref{toy2}). Moreover,
this method allows to solve exactly PDE type (\ref{toy}), which can
not be tackled without relying on an approximation within  the
``Monte-Carlo of Monte-Carlo" method.

\subsection{Numerical examples}

\no We have implemented our algorithm for the two PDE types \beaa
\partial_t u+{1 \over 2 }x^2 \sigma^2_\mathrm{BS} \partial_x^2 u + \beta \left(u^+-u\right)=0,\quad u(T,x)=1-2.1_{x>1}  \;: \; \mathrm{PDE}1\eeaa and
\beaa
\partial_t u+{1 \over 2 }x^2 \sigma^2_\mathrm{BS} \partial_x^2 u + {\beta \over 1-R} \left(
(1-R)\EE_{t,x}[1-2.1_{X_T>1} ]^++R\EE_{t,x}[1-2.1_{X_T>1} ]-u\right)
=0,\quad \mathrm{PDE}2 \eeaa \no with Poisson intensities
$\beta=1\%$, $\beta=3\%$ and a recovery rate $R=0.4$ (see Tab.
\ref{BMMTest3}, \ref{BMMTest4}, \ref{BMMTest5}, \ref{BMMTest6}). In
financial term, this corresponds to CDS spreads around $100$ and
$300$ basis points. The method has been checked using a PDE solver
with the polynomial approximation (\ref{choiceu}) (see Column ``PDE
with poly."). In order to justify the validity of (\ref{choiceu}),
we have included the PDE price with the true non-linearity $u^+$
(see Column ``PDE"). As it can be observed, prices, produced by our
algorithm, converge to the PDE solver with the polynomial
approximation and are close to the exact CVA values. We would like
to highlight that replacing the Black-Scholes generator ${1 \over 2
}x^2 \sigma^2_\mathrm{BS}
\partial_x^2 $ by a multi-dimensional operator $\cal L$ can be easily handled
 in our framework by simulating the branching particles with a
diffusion process associated to $\cal L$. This is out-of-reach with
finite-difference scheme methods and not such an easy step for the
BSDE approach.

\begin{table}
\begin{center}
\begin{tabular}{|c|c|c|c|}
\hline Maturity(Year) & PDE with poly.   & BBM alg. & PDE
\\ \hline
$2$ &  $11.62$ & $11.63(0.00)$ & $11.62$ \\ \hline

$4$ & $16.54$  & $16.53(0.00)$ & $16.55$ \\ \hline

$6$ & $20.28$  & $20.27(0.00)$ & $20.30$ \\ \hline

$8$ &  $23.39$ & $23.38(0.00)$ & $23.41$ \\ \hline

$10$ &  $26.11$ & $26.09(0.00)$ & $26.14$ \\ \hline

\end{tabular}
\end{center}
\caption{MC price quoted in percent as a function of the maturity
for PDE 1 with $\beta=1\%$.} \label{BMMTest3}
\end{table}

\begin{table}
\begin{center}
\begin{tabular}{|c|c|c|c|}
\hline Maturity(Year) & PDE with poly.   & BBM alg.(Stdev) & PDE
\\ \hline

$2$ &  $11.62$ & $11.64(0.00)$ & $11.63$ \\ \hline

$4$ & $16.56$  & $16.55(0.02)$ & $16.57$ \\ \hline

$6$ & $20.32$  & $20.30(0.00)$ & $20.34$ \\ \hline

$8$ &  $23.45$ & $23.45(0.00)$ & $23.48$ \\ \hline

$10$ &  $26.20$ & $26.18(0.00)$ & $26.24$ \\ \hline
\end{tabular}
\end{center}
\caption{MC price quoted in percent as a function of the maturity
for PDE 2 with $\beta=1\%$.} \label{BMMTest4}
\end{table}

\begin{table}
\begin{center}
\begin{tabular}{|c|c|c|c|}
\hline Maturity(Year) & PDE with poly.  & BBM alg. & PDE
\\ \hline
$2$ & $12.34$  & $12.35(0.00)$ & $12.35$    \\ \hline

$4$ & $17.72$  & $17.71(0.00)$ & $17.75$    \\ \hline

$6$ & $21.77$  & $21.76(0.00)$ & $21.82$  \\ \hline

$8$ & $25.07$  & $25.06(0.00)$ & $25.14$ \\ \hline

$10$ & $27.89$  & $27.88(0.00)$ & $27.98$   \\ \hline
\end{tabular}
\end{center}
\caption{MC price quoted in percent as a function of the maturity
for PDE 1 with $\beta=3\%$.} \label{BMMTest5}
\end{table}

\begin{table}
\begin{center}
\begin{tabular}{|c|c|c|c|}
\hline Maturity(Year) & PDE with poly.   & BBM alg.(Stdev) & PDE
\\ \hline
$2$ & $12.38$  & $12.39(0.00)$ & $12.39$    \\ \hline

$4$ & $17.88$  & $17.86(0.00)$ & $17.91$    \\ \hline

$6$ & $22.08$  & $22.07(0.01)$ & $22.14$  \\ \hline

$8$ & $25.58$  & $25.57(0.01)$ & $25.66$ \\ \hline

$10$ & $28.62$  & $28.60(0.01)$ & $28.74$   \\ \hline
\end{tabular}
\end{center}
\caption{MC price quoted in percent as a function of the maturity
for PDE 2 with $\beta=3\%$.} \label{BMMTest6}
\end{table}

\section{Conclusion}
\no Credit valuation adjustment is now an important quantitative
issue which needs to receive special attention. The brute force
``Monte-Carlo of Monte-Carlo" or the BSDE approach is not, as it
looks like, a decent solution for multi-asset portfolios. We have
shown the efficiency of our algorithm based on  marked branching
diffusions on various numerical examples. This method can also be
used for semi-linear PDEs with polynomial non-linearities and
extended to fully non-linear PDEs by including in the branching
process Malliavin weights for derivatives. We left this
investigation for future research.

\no {\bf Acknowledgements.} The author wishes to thank the members
of the Global Markets Quantitative Research Group at Soci\'et\'e
G\'en\'erale  for their comments.  He is also grateful to
Jean-Fran\c{c}ois Delmas and Denis Talay  for useful discussions.

\section*{Appendix}

\begin{proof}[Proof of Theorem \ref{thm}]
The proof proceeds similarly as in subsection
\ref{branchingsection}. By using the independence and the strong
Markov property, we obtain \beaa \hat{u}(t,x)&=&\EE_{t,x}[1_{\tau
\geq T}\psi(z^1_T)] +\sum_{k=0}^M\EE_{t,x}[ 1_{\tau<T}  a_k
\prod_{j=1}^{k} \EE_\tau[ \prod_{i=1}^{N_T^j(\tau)} \prod_{k=0}^M
\left(a_k \over p_k\right)^{\omega_k^j}
\psi(z_T^{i,j,z_\tau})] \\
&=&\EE_{t,x}[1_{\tau \geq T}\psi(z^1_T)] +\EE_{t,x}[ 1_{\tau<T}
\sum_{k=0}^M
a_k \prod_{j=1}^{k} \hat{u}(\tau,z^1_\tau)] \\
&=&\EE_{t,x}[1_{\tau \geq T}\psi(z^1_T)] + \EE_{t,x}[
 F\left(\hat{u}(\tau,z^1_\tau)\right)1_{\tau<T}]  \\
 &=& \EE_{t,x}[ e^{-\int_t^T \beta(s)ds}
\psi(z_T^1)]+\int_t^T   \EE_{t,x}[\beta(s) e^{-\int_t^s \beta(u)du}
F\left(\hat{u}(s,z^1_s)\right)]  ds  \eeaa By assuming that $\hat{u}
\in \mathrm{L}^\infty([0,T] \times \RR^d)$, we deduce that $\hat{u}$
is a viscosity solution of  PDE (\ref{PDEpoly}) (see Theorem 6.4 in
\cite{tou}). The comparison result (Assumption ({\bf Comp})) implies
uniqueness, i.e. $u=\hat{u}$.
\end{proof}

\begin{proof}[Proof of formula \ref{Laplace}] We set
$\PP(T|\omega)={e^{-\beta T N(\omega)} } q(T|\omega)$ for
convenience. We get the relation \beaa q(T|\omega)= \beta
\sum_{k=0}^M&& \int_0^T dt q(t|\omega_0, \ldots, \omega_k -1,
\ldots, \omega_M) N(\omega_0, \ldots, \omega_k -1, \ldots, \omega_M)
p_k e^{\beta t (k-1)}   \eeaa which is equivalent to \beaa
\partial_T q(T|\omega)&=& \beta  \sum_{k=0}^M  q(T|\omega_0, \ldots,
\omega_k -1, \ldots, \omega_M) N(\omega_0, \ldots, \omega_k -1,
\ldots, \omega_M) p_k e^{\beta T (k-1)}  \\
q(0|\omega)&=&\delta_{\omega=0} \eeaa The Laplace transform of $q$,
$\hat{q}(T,c)\equiv \EE^q [\prod_{k=0}^M e^{-(k-1) c_k \omega_k }]$,
satisfies the first-order PDE \beaa
\partial_T \hat{q}(T|c)= \beta \sum_{k=0}^M p_k \left(  \hat{q}(T,c)-
\sum_{q=0}^M \partial_{c_q} \hat{q}(T,c) \right) e^{\left( \beta T
-c_k \right)(k-1)}  \;,\; \hat{q}(0|c)=1 \eeaa The solution is given
by \beaa \hat{q}(T|c)=e^{\left( c_0-c_0(T) \right)} \eeaa where the
coefficients $\{c_q(T)\}_{q=0,\ldots, M}$ are solutions of the ODEs
\beaa {dc_q(t) \over dt }=-\beta \sum_{k=0}^M p_k e^{\left( \beta
(T-t) -c_k(t) \right)(k-1)} \;,\; c_q(0)=c_q \eeaa \no The solution
is given by $c_q(t)=c_q -\beta t -\ln U(t|c)$ with \beaa {d U(t|c)
\over dt }=\beta \left( -U(t|c)+ \sum_{k=0}^M p_k e^{\left( \beta T
-c_k \right)(k-1)} U^{k}(t|c) \right) \;,\; U(0|c)=1\eeaa This gives
\beaa \hat{q}(T|c)&=&e^{\beta T} U(T|c)\;\;\mathrm{if}\;\sum_{k=0}^M
p_k e^{\left( \beta T -c_k
\right)(k-1)}\neq 1  \\
&=&e^{\beta T}\;\;\mathrm{if}\; \sum_{k=0}^M p_k e^{\left( \beta T
-c_k \right)(k-1)}= 1 \eeaa where $U(T|c)$ satisfies  \beaa
\int_1^{U(T|c)} {ds \over -s+ \sum_{k=0}^M p_k e^{\left( \beta T
-c_k \right)(k-1)} s^{k} }=\beta T\;\mathrm{if}\; \sum_{k=0}^M p_k
e^{\left( \beta T -c_k \right)(k-1)}\neq 1  \eeaa \no Finally, we
use that $\hat{\PP}(T|c)=U(T|{c_k \over k-1}+\beta T)$. \end{proof}

\begin{proof}[Proof of Proposition \ref{propbias}] The function $\delta=\bar{v}-v$ satisfies the linear PDE
\beaa
\partial_t \delta + {\cal L} \delta -\beta \delta + \beta \left(
\bar{v}^+ - {v}^+ \over \bar{v}-v\right)1_{v \neq \bar{v}}
\delta+\beta\left( {F}({\bar{v}})-\bar{v}^+\right)=0 \;,\;
\delta(T,x)=0 \eeaa Note that the term $r_t \equiv
1-\left({\bar{v}_t^+ -v_t^+ \over \bar{v}_t-v_t}\right)1_{v_t \neq
\bar{v}_t}$ is lower bounded. Feynman-Kac's formula gives \beaa
\delta(t,x)=\int_t^T \beta
\EE_{t,x}[\left({F}({\bar{v}})-\bar{v}^+\right)e^{-\beta \int_t^s
r_u du}] \eeaa from which we conclude the proof as $\bar{F}(x) \geq
x^+$ by assumption.
\end{proof}

\end{document}